\newcolumntype{L}[1]{>{\raggedright\let\newline\\\arraybackslash\hspace{0pt}}m{#1}}
\newcolumntype{C}[1]{>{\centering\let\newline\\\arraybackslash\hspace{0pt}}m{#1}}
\newcolumntype{R}[1]{>{\raggedleft\let\newline\\\arraybackslash\hspace{0pt}}m{#1}}
\long\def\symbolfootnote[#1]#2{\begingroup
\def\thefootnote{\fnsymbol{footnote}}
\footnote[#1]{#2}\endgroup}
\newtheorem{pro}{Proposition}
\begin{document}

\title{Learn and Pick Right Nodes to Offload}

\author{\IEEEauthorblockN{Zhaowei Zhu, Ting Liu, Shengda Jin,
and Xiliang Luo}
\IEEEauthorblockA{
School of Information Science and Technology, ShanghaiTech University,
Shanghai, China \\
Email: {\{zhuzhw, liuting, jinshd, luoxl\}}@shanghaitech.edu.cn}}

\maketitle

\begin{abstract}
Task offloading is a promising technology to exploit the benefits of fog computing.
An effective task offloading strategy is needed to utilize the computational resources
efficiently. In this paper, we endeavor to seek an online task offloading strategy
to minimize the long-term latency. In particular,
we formulate a stochastic programming problem, where the expectations of the
system parameters change abruptly at unknown time instants. Meanwhile, we consider
the fact that the queried nodes can only feed back the processing results after
finishing the tasks. We then put forward an effective algorithm to solve this
challenging stochastic programming under the non-stationary bandit model.
We further prove that our proposed algorithm is asymptotically optimal in a
non-stationary fog-enabled network. Numerical simulations are carried out to
corroborate our designs.
\end{abstract}

\begin{IEEEkeywords}
Online learning, task offloading, fog computing, stochastic programming,
multi-armed bandit (MAB).
\end{IEEEkeywords}

\vspace{-0.1cm}
\section{Introduction}
\vspace{-0.1cm}

With the ever-increasing demands for intelligent services, devices
such as the smart phones are facing challenges in both battery
life and computing power \cite{2013_Dinh_survey}. Rather than offloading
computation to remote clouds, fog computing distributes computing, storage,
control, and communication services along the Cloud-to-Thing continuum
\cite{2009_Satyanarayanan_cloud,2015_Chiang_Fog_IoT}.

In recent years, task offloading becomes a promising technology and attracts
significant attentions from researchers. In general, tasks with high-complexity are
usually offloaded to other nodes such that the battery lifetime and computational
resources of an individual user can be saved \cite{2013_Barbera_bandwidth_energy_MCC}.
For example, ThinkAir \cite{2012_Kosta_ThinkAir} provided a code offloading framework,
which was capable of on-demand computational resource allocation and parallel
execution for each task. In some literatures, the task offloading was modeled as
a deterministic optimization problem, e.g. the maximization of energy efficiency
in \cite{2017_Yang_MEETS}, the joint minimization of energy and latency in
\cite{2017_Dinh_MobileEdge}, and the minimization of energy consumption under
delay constraints in \cite{2017_You_MobileEdge}.
However, one task offloading strategy needs to rely on the real-time states of
the users and the servers, e.g. the length of the computation queue.
From this aspect, the task offloading is a typical stochastic programming problem
and the conventional optimization methods with deterministic parameters are not
applicable. To circumvent this dilemma, the Lyapunov optimization method was
invoked in
\cite{2015_Kwak_JSAC_DREAM,2017_Mao_MobileEdge,2017_Pu_D2D_fog,2017_Yang_DEBTS}
to transform the challenging stochastic programming problem to a sequential
decision problem, which included a series of deterministic problems in each time
slot. Besides, the authors in \cite{2015_Chen_Mobile_Cloud_Computing} provided one
game-theoretic decentralized approach, where each user can make offloading decisions
autonomously.

The aforementioned task offloading schemes all assumed the availability of
perfect knowledge about the system parameters. However, there are some
cases where these parameters are unknown or partially known at the user.
For example, some particular values (a.k.a. bandit feedbacks) are only revealed
for the nodes that are queried. Specifically, the authors in
\cite{2017_Chen_GG_BanditCVX} treated the communication delay and the computation
delay of each task as a posteriori. In \cite{2014_Tekin_ExpertLearning}, the
mobility of each user was assumed to be unpredictable.
When the number of nodes that can be queried is limited due to the finite available
resources, there exists a tradeoff between exploiting the empirically best node
as often as possible and exploring other nodes to find more profitable actions
\cite{2002_Auer_UCB,2011_non-stationary_UCB}.
To balance this tradeoff, one popular approach is to model the exploration versus
exploitation dilemma as a multi-armed bandit (MAB) problem, which has been
extensively studied in statistics \cite{1985_Berry_bandit}.

There are very few prior works addressing this exploration vs. exploitation
tradeoff during task offloading in a fog-enabled network. In this paper, we
assume the processing delay of each task is unknown when we start to process
the task and endeavor to find an efficient task offloading scheme
with bandit feedback to minimize the user's long-term latency.
Our main contributions are as follows. Firstly, we introduce a non-stationary
bandit model to capture the unknown latency variation, which is more practical
than the previous model-based ones, e.g.
\cite{2017_Dinh_MobileEdge,2017_You_MobileEdge,2015_Kwak_JSAC_DREAM,
2017_Mao_MobileEdge,2017_Yang_DEBTS}. Secondly, an efficient task offloading
algorithm is put forward based on the upper-confidence bound (UCB) policy.
Note our proposed scheme is not a straightforward application of the UCB
policy and thus the conventional analysis is not applicable.
We also provide performance guarantees for the proposed algorithm.

The rest of this paper is organized as follows.
Section \ref{Sec:System_model} introduces the task offloading model and system assumptions.
Section \ref{Algorithm} presents one efficient algorithm and the corresponding performance guarantee.
Numerical results are presented in Section \ref{Simulation} and Section \ref{Conclusion} concludes the paper.

\noindent{\it Notations}:
Notation $|\mathcal A|$, ${\rm Unif}(a,b)$, $\mathbb E(A)$, and $\mathbb P(A)$ stand for the cardinality of set $\mathcal A$, the uniform distribution on $(a,b)$, the expectation of random variable $A$, and the probability of event $A$.
Notation $A_n\overset{\rm a.s.}{\longrightarrow}A$ indicates the sequence $\{A_n\}_{n=1}^{\infty}$ converges almost surely towards $A$.
One indicator function $\mathds{1}{\{ \cdot \}}$ takes the value of $1(0)$ when the specified condition is met (otherwise).

\section{System Model} \label{Sec:System_model}

\subsection{Network Model}

\begin{figure}[t]
\centering
\includegraphics[width = 0.45\textwidth]{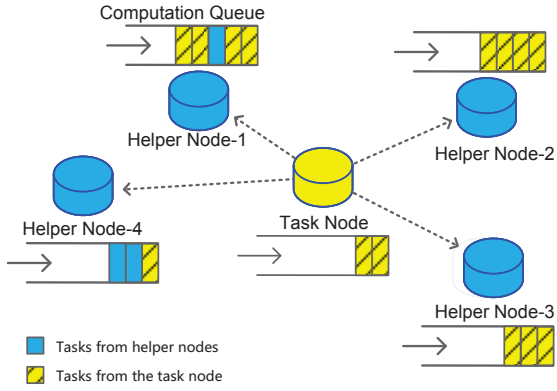}
\caption{A fog-enabled network.
Different colors indicate the tasks from different types of nodes.
The task node is busy dealing with computation tasks, some of which are offloaded to its nearby helper nodes, i.e. helper node-$1$, $2$, $3$, and $4$.
}
\label{Fig:Fog_topology}
\end{figure}

We are interested in a fog-enabled network (see also Fig.~\ref{Fig:Fog_topology})
where both task nodes and helper nodes co-exist. Computation tasks are generated
at each fog node. Each fog node can also communicate with nearby nodes.
The unfinished tasks are assumed to be cached in a first-input first-output (FIFO)
queue at each node. Due to the limited computation and storage resources within one
individual node, the tasks that are processed locally usually experience high
latency, which degrades the quality of service (QoS) and the quality of experience
(QoE). To enable low-latency processing, one task node may offload some of its
computation tasks to the nearby helper nodes. These helper nodes typically possess
more computation and storage resources and are deployed to help other task nodes
on demand. In typical applications such as the online gaming, the tasks are usually
generated periodically and cannot be split arbitrarily. Thus we assume one task is
generated at the beginning of each time slot. Meanwhile, it can be allocated as one
whole piece to one neighboring helper node.

Our goal is to minimize the long-term latency at a particular task node. In particular,
the set of $K$ fog nodes can be classified as
\vspace{-0.2cm}
\begin{equation}
\mathcal I := \left\{ \underbrace{1,2,\cdots,K-1}_{\rm Helper~nodes},
\underbrace{K}_{\rm Task~node} \right\}.	
\end{equation}
In this paper, we assume the task node cannot offload tasks to a helper node
when it is communicating with others. We also assume each task is generated
independently and the task nodes do not cooperate with each other\footnote{
The cooperation among multiple task nodes is beyond the scope of the current
paper and is left for our future works.}.

We use $T(i)$ to represent the amount of time needed to deliver one
bit of information to node-$i$. It is a distance-dependent value
and can be measured before transmission\footnote{We assume different
task nodes occupy pre-allocated orthogonal time or spectrum resources for
the communication to the helper nodes, e.g. TDMA or FDMA.
Note the optimal time/spectrum reusing is itself a non-trivial research
problem \cite{2018_Zhu_D2D}.}.
Denote the data length of task-$t$ by $L_t$.
We also assume the task size is such that the transmission delay $L_t T(i)$
is no more than one time slot. Note the transmission delay is zero
for a locally processed task, i.e. $L_t T(K) = 0$.

Let $Q_t(i)$ denote the queue length of node-$i$ at the beginning of time slot-$t$.
Meanwhile, we denote the time needed to process one bit waiting in the queue at
node-$i$ by $W_t(i)$, and denote the time needed to process one bit in task-$t$
at node-$i$ by $P_t(i)$ when all the tasks ahead in the queue are completed.
Furthermore, we treat $W_t(i)$ and $P_t(i)$ as random variables in this paper.
Accordingly, the expectations are defined as:
\begin{equation}\label{Eq:Expect_W_P}
	\mu_t^W(i):=\mathbb E[W_t(i)], \ \  \mu_t^P(i):=\mathbb E[P_t(i)].
\end{equation}

We assume the total latency of each task is dominated by the delays mentioned above,
i.e. the transmission delay $L_t T(i)$, the waiting delay $Q_t(i) W_{t}(i)$ in the
queue, and the processing delay $L_t P_{t}(i)$. We ignore the latency introduced
during the transmitting of the computing results. Therefore, the total latency
when allocating task-$t$ to node-$i$ can be written as follows.
\begin{equation}\label{Eq:Latency}
	U_t(i) := L_t  T(i) + Q_t(i) W_{t}(i) + L_t P_{t}(i).
\end{equation}

Before we proceed further, here we make the following assumptions:
\begin{itemize}
\item AS-1:
The total latency $U_t(i)$ is unknown before the task is completed;

\item AS-2: The queue length $Q_t(i)$ is broadcasted by node-$i$ at the
beginning of each slot and is available for all the nearby fog nodes;

\item AS-3: The waiting delay and the processing delay, i.e. $W_{t}(i)$
and $P_{t}(i)$, follow unknown distributions. The corresponding expectations,
i.e. $\mu_t^W(i)$ and $\mu_t^P(i)$, change abruptly at unknown time instants
(a.k.a. breakpoints).
\end{itemize}

Different from the model-based task offloading problems addressed in \cite{2017_Dinh_MobileEdge,2017_You_MobileEdge,2015_Kwak_JSAC_DREAM,
2017_Mao_MobileEdge,2017_Yang_DEBTS}, we do not require any specific
relationships between the CPU frequencies and the processing delays in
AS-1 and AS-3 as in \cite{2017_Chen_GG_BanditCVX}. This is a more realistic
setting due to the following reasons.
Firstly, the data lengths and the computation complexities of tasks should
be modeled as a sequence of independent random variables.
This is because their distributions may change abruptly and be completely
different due to the changes in task types.
Additionally, the computation capability, e.g. CPU frequencies, CPU cores,
and memory size, of each node is different and may also follow abruptly-changing
distributions. All of these uncertainties mentioned above make it very tough for
an individual node to forecast the amount of time spent in processing different tasks.
It also costs a lot of overheads for an individual node to obtain the global
information about the whole system.
As a result, the processing delay and the waiting delay cannot be calculated
accurately in a practical system with the conventional model, where the delays
are simply determined by the data length and the configured CPU frequency
\cite{2017_Mao_MobileEdge}.

In our paper, the processing delay and the waiting delay are only reported
after the corresponding task is finished. Namely, the observations of the waiting
delay $\tau_t^{W}$ and the processing delay $\tau_t^{P}$ are treated as posterior
information. Note these delays can be obtained via the timestamp feedback from
the corresponding node after finishing task-$t$. Accordingly, we obtain the
realizations of $W_t(i)$ and $P_t(i)$ as
\begin{equation}
	w_t(i) = \frac{\tau_t^{W}}{Q_t(i)}\mathds 1{\{I_t=i\}}, \ p_t(i) = \frac{\tau_t^{P} }{L_t}\mathds 1{\{I_t=i\}}.
\end{equation}

\subsection{Problem Formulation}

The general minimization of the long-term average latency of tasks can
be formulated as follows.
\begin{equation}\label{Eq:GlobalOpt}
\begin{split}
\underset{\{I_t,\forall t\}}{\rm minimize} \  &
	\lim_{T\rightarrow \infty} \frac{1}{T}\sum_{t=1}^T  \sum_{i=1}^K U_t(i)\mathds 1{\{I_t=i\}}  \\
	{\rm subject \ to}  & \ \ I_t \in\mathcal I,t=1,2,\cdots,T,
\end{split}
\end{equation}
where $I_t$ represents the index of the node to process task-$t$. There are two
difficulties in solving the above problem. Firstly, it is a stochastic programming
problem. The exact information about the latency $U_t(i)$ is not available before
the $t$-th task is completed. Additionally, even if $U_t(i)$ is known apriori,
this problem is still a combinatorial optimization problem and the complexity is
in the order of $\mathcal O(K^T)$. This is due to the fact that the previous
offloading decisions determine the queue length in each fog node and further affect
the decisions of future tasks. See \cite{2014_Tekin_ExpertLearning} for an example.
To render the task offloading strategies welcome online updating, one popular way
is to convert this challenging stochastic and combinatorial optimization problem
into one low-complexity sequential decision problem at each time slot
\cite{2015_Kwak_JSAC_DREAM,2017_Mao_MobileEdge,2017_Pu_D2D_fog,2017_Yang_DEBTS}.
Given the task offloading decisions made in the previous $(t-1)$ time slots, the
optimal strategy turns into allocating task-$t$ to the node with minimal latency
at time slot-$t$. Meanwhile, under the stochastic framework \cite{2012_Bubeck_Bandit},
it is more natural to focus on the expectation, i.e. $\mathbb E[U_t(i)]$.
Accordingly, the problem in (\ref{Eq:GlobalOpt}) becomes the following
one in the $t$-th time slot:
\begin{equation}\label{Eq:OptEach}
\begin{split}
\underset{I_t \in\mathcal I}{\rm minimize} \  &
	\sum_{i\in\mathcal I} \mathbb E[U_t(i)]\mathds 1{\{I_t=i\}}
\end{split}
\end{equation}

However, the above formulation is still a stochastic programming problem.
Although the tasks offloaded previously do enable an empirical average as an
estimate of the expectation $\mathbb E[U_t(i)]$, this information may be
inaccurate due to limited number of observations. Note the information about
node-$i$ is from the feedbacks from node-$i$ when it finishes the corresponding tasks.
In order to get more information about one specific node, the task node has to
offload more tasks to that node even though it may not be the empirically best
node to offload. Therefore, an exploration-exploitation tradeoff exists in this
problem. In the following parts, we endeavor to find one efficient scheme to
solve the problem in (\ref{Eq:OptEach}).

\section{Efficient Offloading Algorithm} \label{Algorithm}

\subsection{Task Offloading with Discounted-UCB}

To strike a balance between the aforementioned exploration and exploitation,
we model the task offloading as a non-stationary multi-armed bandit (MAB) problem
\cite{2011_non-stationary_UCB}, where each node in $\mathcal I$ is regarded as one
arm. When a particular task is generated, we need to determine one fog node, either
one helper node or the task node, to deal with it. This corresponds to choosing
one arm to play in the MAB.

Recall that the task node generates one task at the beginning of each time slot.
Let $\tau_s\le s + \tau_{\rm max}$ be the time when the feedback of the $s$-th
task is received, where $\tau_{\rm max}$ is the maximum permitted latency.
If $\tau_s>s + \tau_{\rm max}$, the task fails and is discarded. According to
\cite{2011_non-stationary_UCB}, we can estimate $W_t(i)$ and $P_t(i)$
with the UCB policy as
\begin{equation}\label{Eq:W_P_est}
\begin{split}
\bar W_t(\gamma,i):=&\frac{1}{N_t(\gamma,i)} \sum_{s=1}^{t} \gamma^{t-\tau_s} w_s(i) \mathds 1{\{I_s=i, \tau_s\le t\}},\\
\bar P_t(\gamma,i):=&\frac{1}{N_t(\gamma,i)} \sum_{s=1}^{t} \gamma^{t-\tau_s} p_s(i) \mathds 1{\{I_s=i, \tau_s\le t\}},
\end{split}
\end{equation}
where $\gamma\in(0,1)$ represents the discount factor, and
\begin{equation}\label{Eq:N_est}
	N_t(\gamma,i):=  \sum_{s=1}^{t} \gamma^{t-\tau_s}  \mathds 1{\{I_s=i, \tau_s\le t\}}.
\end{equation}
Then the latency $U_t(i)$ can be estimated as
\begin{equation}\label{Eq:Delay_est}
\bar \mu_t(\gamma,i):= L_t  T(i) + { Q_t(i)  \bar W_t(\gamma,i)
 + L_t \bar P_t(\gamma,i) }.
\end{equation}
Note that the latency in (\ref{Eq:Delay_est}) is estimated based on the history
information of $W_s(i)$ and $P_s(i)$ instead of the previous latency values, i.e.
$U_s(i), s<t$. This is due to the fact that the individual latency closely depends
on the queue length $Q_t$ and the task length $L_t$, which may vary significantly
for different types of tasks. Thus it is not trustworthy to estimate $U_t(i)$
with the previous latency values directly. On the other hand, the time needed to
process one bit of a task is typically determined by the node capability, which is
relatively stable and thus suitable to be estimated with the sample mean.

At node-$i$, the total amount of time utilized to process task-$k$ is compared
with the maximal tolerable latency and the time difference is
defined as a reward, i.e. $X_t(i) := \tau_{\rm max} - U_t(i)$.
Clearly, a negative reward indicates a task failure. Based on the estimated
latency in (\ref{Eq:Delay_est}), the estimated reward is given by
\begin{equation}\label{Eq:X_est}
	\bar X_t(\gamma,i) := \tau_{\rm max} -  \bar \mu_t(\gamma,i).
\end{equation}
The parameters $N_{t+1}(\gamma,i)$, $\bar W_{t+1}(\gamma,i)$, and
$\bar P_{t+1}(\gamma,i)$ can be updated iteratively with low complexity.
Particularly, let $\mathcal S_t := \{s|t < \tau_{s} \le t+1\}$
denote the set of indices of tasks completed within the interval $(t,t+1]$
and we can have
\begin{equation}\label{Eq:Update_N}
	N_{t+1}(\gamma,i) = \gamma N_t(\gamma,i) + \sum_{s\in\mathcal S_t} \gamma^{t+1-\tau_s} \mathds 1{\{I_s = i\}},
\end{equation}
\begin{equation}\label{Eq:Update_W}
	\begin{split}
		\bar W_{t+1}(\gamma,i)   =  \frac{1}{N_{t+1}(\gamma,i)}&\Big[\gamma  N_{t}(\gamma,i)\bar W_{t}(\gamma,i) \\
	  + \sum_{s\in \mathcal S_t} &\gamma^{t+1-\tau_s} w_s(i)\mathds 1{\{I_s = i\}}\Big],
	\end{split}
\end{equation}
\begin{equation}\label{Eq:Update_P}
	\begin{split}
		\bar P_{t+1}(\gamma,i)   =  \frac{1}{N_{t+1}(\gamma,i)}&\Big[\gamma  N_{t}(\gamma,i)\bar P_{t}(\gamma,i) \\
	  + \sum_{s\in \mathcal S_t} &\gamma^{t+1-\tau_s} p_s(i)\mathds 1{\{I_s = i\}}\Big].
	\end{split}
\end{equation}

The exploration-exploitation tradeoff is then handled by applying the UCB policies
as in \cite{2011_non-stationary_UCB}. An UCB is constructed as
$\bar X_t(\gamma,i) + c_t(\gamma,i)$. The padding function $c_t(\gamma,i)$
characterizes the exploration bonus, which is defined as
\begin{equation}\label{Eq:c_est}
c_t(\gamma,i) := 2 \tau_{\rm max}
  \sqrt{\frac{\xi \log n_t(\gamma)}{N_t(\gamma,i)}},
\end{equation}
where $\xi$ stands for an exploration constant and
\begin{equation}\label{Eq:n_est}
   n_t(\gamma) := \sum_{i=1}^{K} N_t(\gamma,i).
\end{equation}
The node selected to process task-$t$ is then determined by
\begin{equation}\label{Eq:Sel_Node}
	 I_t = \arg\max_{i\in\mathcal I}  \ \bar X_t(\gamma,i) + c_t(\gamma,i).
\end{equation}
Our proposed strategy, i.e. Task Offloading with Discounted-UCB (TOD),
is summarized in Algorithm \ref{Alg_TOD}.

Although the above proposed task offloading model is essentially a non-stationary
MAB model, there are two main differences compared with the conventional model
as proposed in \cite{2011_non-stationary_UCB}.
First, the feedback was obtained instantaneously with the decision making
in the conventional model. While in our model, as indicated in (\ref{Eq:W_P_est}),
the feedback is not available until the task is finished.
The corresponding latency should not be ignored since it is exactly the
information we need. Note the delayed feedback affects the performance
analyses as discussed in \cite{2013_DelayFeedback_Online}. Second, the best arm
is assumed to change only at the breakpoints in \cite{2011_non-stationary_UCB}.
However, our model allows the best node to vary when processing different tasks.
Therefore, the performance guarantee for the conventional discounted-UCB algorithm
cannot be applied directly to our proposed TOD.

\begin{algorithm}[t]
\caption{TOD (Task Offloading with Discounted-UCB) Algorithm}
\begin{algorithmic}[1]
\State {\bf Initialization:} Set appropriate $\gamma$. Set $t=1$, $\bar X_t(\gamma,i) = \bar W_t(\gamma,i)=\bar P_t(\gamma,i)=0, \forall i\in\mathcal I$.
\State {\bf Repeat}
\State \quad Let $I_t = t$, offload task-$t$ to node-$I_t$; $t=t+1$;
\State {\bf Until}  $t>K$;
\State Update  $\bar W_t(\gamma,i)$, $\bar P_t(\gamma,i)$ and $\bar N_t(\gamma,i)$ as (\ref{Eq:W_P_est}) and (\ref{Eq:N_est});
\State {\bf Repeat}
\State \quad Update  $\bar X_t(\gamma,i)$ and $\bar c_t(\gamma,i)$ as (\ref{Eq:X_est}) and (\ref{Eq:c_est});
\State \quad Determine $I_t$ as (\ref{Eq:Sel_Node}), offload task-$t$ to node-$I_t$;
\State \quad $t=t+1$;
\State \quad Update  $\bar N_t(\gamma,i)$, $\bar W_t(\gamma,i)$, and $\bar P_t(\gamma,i)$ as (\ref{Eq:Update_N})-(\ref{Eq:Update_P});
\State {\bf Until}  $t>T$;
\end{algorithmic}
\label{Alg_TOD}
\end{algorithm}

\subsection{Performance Analysis}

According to (\ref{Eq:Expect_W_P}) and (\ref{Eq:Latency}), the expected latency
$\mathbb E[U_t(i)]$ can be expressed as
\begin{equation}
   \mu_t(i):= \mathbb E[U_t(i)] = L_t T(i) + Q_t(i) \mu^W_t(i) + L_t \mu^P_t(i).
\end{equation}
Given the offloading strategies for the first $(t-1)$ tasks, according to
(\ref{Eq:OptEach}), the best node to handle task-$t$ is given by
$i_t^*:= \arg \min_{i\in\mathcal I}  \mu_t(i)$.
Additionally, we use
$$\tilde N_T(i):= \sum_{t=1}^T \mathds 1{\{I_t =i \ne i_t^*\}}$$
to denote the number of tasks offloaded to node-$i$ while it is not the best
node during the first $T$ time slots. From AS-3, we know the expectations of
system parameters could change abruptly at each breakpoint.
We use $\Upsilon_T$ to denote the number of breakpoints before time $T$.
The following proposition provides an upper bound for $\mathbb E(\tilde N_T(i))$.

\begin{pro}\label{Pro_BoundTimes}
Assume $\xi>1/2$ and $\gamma\in(0,1)$ satisfies
\begin{equation*}
{\gamma^{\tau_{\rm max}} (1-\gamma^{1/(1-\gamma)})} /{(1-\gamma)} >e.
\end{equation*}
For each node $i\in{\mathcal I}$, we have the following upper bound for
$\mathbb E(\tilde N_T(i))$:
\begin{equation}
\label{Eq:BoundTimes}
\mathbb E \left[\tilde N_T(i)\right] \le 1 + T(1-\gamma) B(\gamma) +
\Upsilon_T C(\gamma) + \frac{2}{1-\gamma},
\end{equation}
where
\begin{eqnarray}
C(\gamma)&:=& \log_\gamma((1-\gamma)\xi\log n_K(\gamma)) + \tau_{\rm max},
\nonumber\\
B(\gamma) &:= & \Big( \frac{-16 \tau_{\rm max}^2 \xi \log {[\gamma^{\tau_{\rm max}}  (1-\gamma)]} } {(\Delta\mu_T(i))^{2}} + \tau_{\rm max}\Big )\nonumber \\
&& \cdot \frac{\lceil T(1-\gamma)  \rceil}{T(1-\gamma)} \gamma^{-\frac{1}{1-\gamma}} +  \frac{2}{\gamma^{\tau_{\rm max}} } {\log \frac{\gamma^{\tau_{\rm max}} } {1-\gamma}},
\nonumber \\
\Delta \mu_T(i) &:=& \min_{t\in\{1,\cdots,T\}, i^*_t\ne i}  \ \mu_t(i)-\mu_t(i^*_t).
\nonumber
\end{eqnarray}
\end{pro}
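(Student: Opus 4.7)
The plan is to follow the discounted-UCB analysis template of \cite{2011_non-stationary_UCB}, adapting it to the two non-standard features of our model: the delayed feedback (observations generated by a task offloaded at slot $s$ are only incorporated into the estimator at slot $\tau_s\le s+\tau_{\rm max}$), and the fact that the optimum $i_t^*$ is allowed to change at every slot rather than only at the $\Upsilon_T$ breakpoints. First I would observe that the event $\{I_t=i\ne i_t^*\}$ forces $\bar X_t(\gamma,i)+c_t(\gamma,i)\ge \bar X_t(\gamma,i_t^*)+c_t(\gamma,i_t^*)$, which a union bound decomposes into three cases: (a) the discounted estimate at node $i_t^*$ underestimates its expectation by more than $c_t(\gamma,i_t^*)$; (b) the estimate at node $i$ overestimates its expectation by more than $c_t(\gamma,i)$; (c) the clean event where both concentration inequalities hold, in which case comparing (\ref{Eq:X_est}) with the definition of $\Delta\mu_T(i)$ forces $N_t(\gamma,i)\le 16\tau_{\rm max}^2\xi\log n_t(\gamma)/(\Delta\mu_T(i))^2$, which is exactly the deterministic mechanism driving $B(\gamma)$.

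The next step is to establish a Hoeffding-type concentration inequality for the discounted, delayed sum $N_t(\gamma,i)\bigl(\bar X_t(\gamma,i)-\mathbb E[\bar X_t(\gamma,i)]\bigr)$. The key twist relative to the stationary analysis is that a sample contributed by a slot $s$ is weighted not by $\gamma^{t-s}$ but by $\gamma^{t-\tau_s}$, with $\tau_s\in[s,s+\tau_{\rm max}]$. I would handle this by factoring the weight as $\gamma^{t-s}\cdot\gamma^{s-\tau_s}$ and using $\gamma^{\tau_{\rm max}}\le\gamma^{s-\tau_s}\le 1$, which is exactly what produces the $\gamma^{\tau_{\rm max}}$ factor that appears inside the logarithms of $B(\gamma)$ and $C(\gamma)$. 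I would then discretize time into blocks of length $\lceil T(1-\gamma)\rceil$ and apply a peeling argument over the possible values of $N_t(\gamma,i)$, parallel to the stationary-window analysis of \cite{2011_non-stationary_UCB} but with the effective variance budget inflated by $\gamma^{-\tau_{\rm max}}$; this yields the remaining ingredients of $B(\gamma)$, together with the geometric tail $2/(1-\gamma)$ and the initialization term $1$.

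Third, I would account for the bias contributed by the changing optimum. Because $i_t^*$ may vary slot-by-slot, the discounted average at time $t$ generally aggregates rewards drawn under several different means, so $\mathbb E[\bar X_t(\gamma,i)]$ need not coincide with the current $X_t(i)$. Within a stationary phase between two consecutive breakpoints the bias decays geometrically, and it is absorbed by the padding $c_t(\gamma,i)$ once the time since the previous breakpoint exceeds $\log_\gamma((1-\gamma)\xi\log n_K(\gamma))+\tau_{\rm max}$; summing this ``forgetting'' interval over all breakpoints produces the $\Upsilon_T C(\gamma)$ contribution. The hypothesis $\gamma^{\tau_{\rm max}}(1-\gamma^{1/(1-\gamma)})/(1-\gamma)>e$ is exactly the inequality needed to guarantee that $\log n_t(\gamma)\ge 1$ once a full discount window has accumulated, so that the arguments of the logarithms in $B(\gamma)$ and $C(\gamma)$ remain positive and the concentration step is meaningful.

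The main obstacle I expect is the delayed-feedback concentration lemma itself: unlike in \cite{2011_non-stationary_UCB}, the decision $I_s$ is not measurable with respect to the natural filtration of the feedbacks $w_s(i)$ and $p_s(i)$, which are revealed only at time $\tau_s\ge s$, and the weights $\gamma^{t-\tau_s}$ are themselves random. The cleanest route I foresee is to construct an auxiliary filtration that includes the delays $\{\tau_s\}_{s\le t}$, show that the corresponding weighted partial sums form a bounded martingale-difference sequence, and apply an Azuma--Hoeffding inequality whose sub-Gaussian proxy is inflated by at most $\gamma^{-\tau_{\rm max}}$ due to the factorization above. Once this delayed-feedback Hoeffding bound is in hand, the remaining bookkeeping is routine and the three summands on the right-hand side of (\ref{Eq:BoundTimes}) correspond exactly to the three cases enumerated in the first paragraph.
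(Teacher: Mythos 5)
Your proposal follows essentially the same route as the paper's own proof: the identical three-event decomposition with a threshold of the form $A(\gamma,i)=16\tau_{\rm max}^2\xi\log n(\gamma)/(\Delta\mu_T(i))^{2}$ ruling out the ``close means'' event, the same breakpoint-window argument in which the forgetting bias is absorbed into half of the padding and contributes $\Upsilon_T C(\gamma)$, the same delayed-feedback bookkeeping via shifts of the discount exponent by at most $\tau_{\rm max}$ (which is where the $\gamma^{\tau_{\rm max}}$ factors in $B(\gamma)$, $C(\gamma)$, and the condition $\tilde n(\gamma)>e$ come from), and concentration supplied by the self-normalized maximal Hoeffding inequality of \cite{2011_non-stationary_UCB}. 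Only trivial slips: since $\tau_s\ge s$ the factor satisfies $1\le\gamma^{s-\tau_s}\le\gamma^{-\tau_{\rm max}}$ (not $\gamma^{\tau_{\rm max}}\le\gamma^{s-\tau_s}\le 1$), though your later ``inflate by $\gamma^{-\tau_{\rm max}}$'' is the correct consequence, and the discretization is into $\lceil T(1-\gamma)\rceil$ blocks of length $1/(1-\gamma)$ rather than blocks of length $\lceil T(1-\gamma)\rceil$.
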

Detailed proof for the above proposition can be found in arXiv and is omitted here
due to the space limitation\footnote{arXiv:1804.08416,
https://arxiv.org/abs/1804.08416.}.
Clearly, the upper bound depends on the number of total tasks, the number of
breakpoints $\Upsilon_T$, and the choice of discount factor $\gamma$.
From (\ref{Eq:BoundTimes}), we see the term $T(1-\gamma) B(\gamma)$ decreases
as the feasible $\gamma$ increases. On the other hand, the last two terms,
i.e. $\Upsilon_T C(\gamma)$ and $2/(1-\gamma)$, are increasing when the feasible
$\gamma$ is increasing. This is consistent with our intuition that a higher
discount factor $\gamma$ contributes to a better estimation in the stationary case,
while it results in slow reaction to abrupt changes of environments. Therefore,
there is a tradeoff between different terms in (\ref{Eq:BoundTimes}).
To strike a balance between the stable and the abruptly-changing environments,
similar to \cite{2011_non-stationary_UCB}, we choose $\gamma$ as
\begin{equation}\label{Eq:Sel_Gamma}
	\gamma = 1-(4\tau_{\rm max})^{-1}\sqrt{\Upsilon_T/T}.
\end{equation}
Accordingly, we can establish the following proposition.

\begin{pro}\label{Eq:BoundTimesOrder}
When $\Upsilon_T=\mathcal O(T^\beta), \beta\in[0,1)$, and $T\rightarrow\infty$,
the value of $\mathbb E(\tilde N_T(i))$ is in the order of
\begin{equation*}
\mathbb E \left[\tilde N_T(i)\right] = \mathcal{O}\left( \sqrt{T\Upsilon_T}
\log T \right).
\end{equation*}
\end{pro}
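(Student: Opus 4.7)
The plan is to substitute the prescribed $\gamma = 1-(4\tau_{\max})^{-1}\sqrt{\Upsilon_T/T}$ directly into the four-term bound of Proposition~\ref{Pro_BoundTimes} and verify that each summand is $\mathcal{O}(\sqrt{T\Upsilon_T}\log T)$. Writing $\epsilon := 1-\gamma$, the hypothesis $\Upsilon_T=\mathcal{O}(T^\beta)$ with $\beta\in[0,1)$ (and $\Upsilon_T\ge 1$ so that $\gamma\in(0,1)$) gives $\epsilon=\mathcal{O}(T^{(\beta-1)/2})\to 0$, hence $\gamma\to 1$, $T\epsilon=\Theta(\sqrt{T\Upsilon_T})$, and $1/\epsilon=\mathcal{O}(\sqrt{T})$. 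A preliminary step is to observe that the admissibility condition $\gamma^{\tau_{\max}}(1-\gamma^{1/(1-\gamma)})/(1-\gamma)>e$ of Proposition~\ref{Pro_BoundTimes} holds for all sufficiently large $T$, since the left-hand side behaves like $(1-e^{-1})/\epsilon\to\infty$.

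The main bookkeeping is then to expand $B(\gamma)$ and $C(\gamma)$ under the limit $\epsilon\to 0$. For $B(\gamma)$, I would invoke the standard equivalents $(1-\epsilon)^{-1/\epsilon}\to e$, $\log(1-\epsilon)\sim-\epsilon$, and $\lceil T\epsilon\rceil/(T\epsilon)\to 1$ to show that both the first summand and the residual $(2/\gamma^{\tau_{\max}})\log(\gamma^{\tau_{\max}}/(1-\gamma))$ are $\Theta(\log(1/\epsilon))=\Theta(\log T)$, hence $B(\gamma)=\Theta(\log T)$ and $T(1-\gamma)B(\gamma)=\mathcal{O}(\sqrt{T\Upsilon_T}\log T)$. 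For $C(\gamma)$, the delicate point is the change of base $\log_\gamma x = \log x/\log\gamma$: because only $K$ tasks can be initiated by time $K$, $n_K(\gamma)\le K$ is bounded in $T$, so $\log[(1-\gamma)\xi\log n_K(\gamma)]=\log\epsilon+\mathcal{O}(1)\sim-\log(1/\epsilon)$, and dividing by $\log\gamma\sim-\epsilon$ yields $C(\gamma)=\Theta(\log T/\epsilon)$. Multiplying by $\Upsilon_T$ gives $\Upsilon_T C(\gamma)=\mathcal{O}(\Upsilon_T\log T/\epsilon)=\mathcal{O}(\sqrt{T\Upsilon_T}\log T)$.

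The remaining summands $2/(1-\gamma)=\mathcal{O}(\sqrt{T})$ and the constant $1$ are dominated by $\sqrt{T\Upsilon_T}\log T$ whenever $\Upsilon_T\ge 1$, so summing the four contributions produces the claimed order.

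The main obstacle I expect is the asymptotic analysis of $C(\gamma)$: the change-of-base formula involves a ratio in which both numerator and denominator are negative and vanishing, so the signs and the equivalents $\log(1-\epsilon)\sim-\epsilon$ and $\log\epsilon\to-\infty$ must be tracked precisely to avoid losing a $\log T$ factor or introducing a sign error. A secondary subtlety is verifying the boundedness of $n_K(\gamma)$ and confirming that each of the smaller pieces inside $B(\gamma)$, once multiplied by $T\epsilon$, does not exceed the claimed order.
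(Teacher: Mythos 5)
Your proposal is correct and follows essentially the same route as the paper: substitute $\gamma = 1-(4\tau_{\rm max})^{-1}\sqrt{\Upsilon_T/T}$ into the bound of Proposition~\ref{Pro_BoundTimes} and check that $T(1-\gamma)B(\gamma)$ and $\Upsilon_T C(\gamma)$ are $\mathcal O(\sqrt{T\Upsilon_T}\log T)$ while $2/(1-\gamma)$ is of lower order. The paper simply asserts these orders without detail, so your asymptotic bookkeeping for $B(\gamma)=\Theta(\log T)$ and $C(\gamma)=\Theta(\log T/(1-\gamma))$ (including the sign-tracking in the change of base and the boundedness of $n_K(\gamma)$) just makes explicit what the paper leaves implicit.
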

\begin{proof}
	Let $\gamma = 1-(4\tau_{\rm max})^{-1}\sqrt{\Upsilon_T/T}$, then the three terms in (\ref{Eq:BoundTimes}), i.e. $T(1-\gamma)B(\gamma)$, $\Upsilon_T C(\gamma)$, and $2/(1-\gamma)$, are in the order of $\mathcal O(\sqrt{\Upsilon_T T}\log T)$, 
$\mathcal O(\sqrt{\Upsilon_T T}\log T)$, and $\mathcal O(\sqrt{T/\Upsilon_T})$, respectively. Thus $\mathbb E [\tilde N_T(i)]$ is in the order of
$ \mathcal{O}( \sqrt{T\Upsilon_T} \log T )$.
\end{proof}

To show the optimality of our proposed Algorithm \ref{Alg_TOD}, we define the
pseudo-regret in offloading the first $T$ tasks as \cite{2012_Bubeck_Bandit}
\begin{equation}
\label{Eq:Regret}
\zeta_T :=\frac{1}{T} \mathbb E
\left[ \sum_{t=1}^T \left( U_t(I_t) - \mathbb E [U_t(i_t^*)] \right) \right].
\end{equation}
We have the following result regarding the pseudo-regret $\zeta_T$.

\begin{pro}
When $\Upsilon_T=\mathcal O(T^\beta), \beta\in[0,1)$, the proposed approach
in Algorithm \ref{Alg_TOD} is asymptotically optimal in the sense that
 $\lim_{T\rightarrow\infty} \zeta_T \overset{\rm a.s.}{\longrightarrow} 0$.
\end{pro}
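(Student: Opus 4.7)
The plan is to reduce the regret bound to the sub-optimal pull bound already delivered by Proposition~\ref{Eq:BoundTimesOrder}. Starting from the definition~(\ref{Eq:Regret}), I would write
\begin{equation*}
U_t(I_t)-\mathbb E[U_t(i_t^*)]=\bigl(U_t(I_t)-U_t(i_t^*)\bigr)+\bigl(U_t(i_t^*)-\mathbb E[U_t(i_t^*)]\bigr).
\end{equation*}
The second bracket has zero expectation, because $i_t^*$ depends only on $\{L_t,Q_t(\cdot),\mu_t(\cdot)\}$, i.e.\ on information available before the realization of the random waiting/processing times for task-$t$, so the tower property gives $\mathbb E[U_t(i_t^*)-\mathbb E[U_t(i_t^*)]]=0$. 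Hence
\begin{equation*}
T\zeta_T=\mathbb E\!\left[\sum_{t=1}^T\bigl(U_t(I_t)-U_t(i_t^*)\bigr)\mathds 1\{I_t\neq i_t^*\}\right].
\end{equation*}

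Next I would bound each summand by a constant. Because every task that is kept in the system contributes at most $\tau_{\rm max}$ to the latency (tasks exceeding $\tau_{\rm max}$ are discarded), the gap $U_t(I_t)-U_t(i_t^*)$ is almost surely bounded by some constant $D\le 2\tau_{\rm max}$. Splitting by the arm that was pulled and using the definition of $\tilde N_T(i)$ gives
\begin{equation*}
T\zeta_T\le D\sum_{i=1}^{K}\mathbb E\bigl[\tilde N_T(i)\bigr].
\end{equation*}

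Now I plug in Proposition~\ref{Eq:BoundTimesOrder} with the recommended choice $\gamma=1-(4\tau_{\rm max})^{-1}\sqrt{\Upsilon_T/T}$ from~(\ref{Eq:Sel_Gamma}); this is feasible for $T$ large enough to satisfy the hypothesis of Proposition~\ref{Pro_BoundTimes}. Each $\mathbb E[\tilde N_T(i)]$ is then $\mathcal O(\sqrt{T\Upsilon_T}\log T)$, so
\begin{equation*}
\zeta_T=\mathcal O\!\left(\frac{K\sqrt{T\Upsilon_T}\log T}{T}\right)=\mathcal O\!\left(T^{(\beta-1)/2}\log T\right),
\end{equation*}
which vanishes as $T\to\infty$ because $\beta<1$. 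Since $\zeta_T$ is a deterministic (expectation-based) quantity, this deterministic limit is the same as the a.s.\ limit claimed in the statement.

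The mildly subtle step is the verification that the second bracket above has zero expectation in the presence of \emph{delayed} and possibly \emph{failed} feedback: $i_t^*$ is defined from the true, not estimated, $\mu_t(\cdot)$, and so long as the queue lengths and task length at time $t$ are measurable with respect to history independent of $U_t$'s intrinsic randomness, the cancellation is clean. The rest is bookkeeping: matching the order obtained from Proposition~\ref{Eq:BoundTimesOrder}, summing over the $K$ arms, and checking that $\sqrt{\Upsilon_T/T}\log T\to 0$ under $\Upsilon_T=\mathcal O(T^\beta)$ with $\beta\in[0,1)$.
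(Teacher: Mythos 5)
Your proposal is correct and follows essentially the same route as the paper: bound the per-task regret on the event $\{I_t\ne i_t^*\}$ by a constant of order $\tau_{\rm max}$, reduce $\zeta_T$ to $\frac{1}{T}\sum_{i\in\mathcal I}\mathbb E[\tilde N_T(i)]$, invoke Propositions \ref{Pro_BoundTimes} and \ref{Eq:BoundTimesOrder} to get $\zeta_T=\mathcal O(T^{(\beta-1)/2}\log T)\to 0$, and note that almost-sure convergence is immediate since $\zeta_T$ is deterministic (the paper dresses this last step up with a Borel--Cantelli-style summation, while your explicit zero-mean splitting of $U_t(i_t^*)-\mathbb E[U_t(i_t^*)]$ is a slightly more careful justification of the paper's one-line inequality). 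No substantive difference in method or in what is proved.
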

\begin{proof}
Note $U_t(I_t) - \mathbb E [U_t(i_t^*)]  \le
\tau_{\rm max} \mathds 1{\{I_t\ne i_t^*\}}$.
We have
\begin{equation}
\begin{split}
	\zeta_T &\le \frac{1}{T} \mathbb E \left[ \sum_{t=1}^T \tau_{\rm max} \mathds 1{\{I_t\ne i_t^*\}}   \right] \\
	&\le \frac{\tau_{\rm max}}{T}  \sum_{i\in\mathcal I} \mathbb E \left[ \tilde N_T(i) \right].
\end{split}
\end{equation}
According to Proposition \ref{Pro_BoundTimes} and Proposition
\ref{Eq:BoundTimesOrder}, we obtain
\begin{equation}
\zeta_T = \mathcal{O}\left( \sqrt{\Upsilon_T/T} \log T \right)
= \mathcal{O}\left(  T^{\frac{\beta-1}{2}} \log T \right).
\end{equation}
Then for any $\varepsilon>0$, there exists a finite integer $N_\varepsilon$,
such that
\begin{equation}
\mathbb P(|\zeta_T| \ge \varepsilon) = 0, \ \forall T\ge  N_\varepsilon.
\end{equation}
Therefore,
\begin{equation}
\begin{split}
\sum_{T=1}^{\infty} \mathbb P(|\zeta_T| \ge \varepsilon)
\le N_\varepsilon < \infty.
\end{split}
\end{equation}
The above equation indicates $\zeta_T\overset{\rm a.s.}{\longrightarrow} 0$.
\end{proof}

\section{Numerical Results} \label{Simulation}

In this section, we evaluate the performance of our proposed offloading algorithm by testing $10,000$ rounds of task offloading.
One task is generated in each round.
Some common system parameters are set as follows.\\
\noindent$\bullet$ The network consists of $1$ task node and $9$ helper nodes;\\
\noindent$\bullet$ Each time slot is $20$ ms. Data size follows ${\rm Unif}(1,15)$ KB;\\
\noindent$\bullet$ Maximal latency is $\tau_{\rm max}=20$ slots,  $\xi = 0.6$;\\
\noindent$\bullet$ The delay of processing one bit of the task is simulated following $P_t(i) = \sigma^{\rm cplx}_t/\sigma^{\rm CPU}_i$, where $\sigma^{\rm cplx}_t$ characterizes the complexity of task-$t$, and $\sigma^{\rm CPU}_i$ reflects the CPU capability of node-$i$.
Both variables follow ${\rm Unif}(1,10)$;\\
\noindent$\bullet$ The CPU capability of node-$i$ is changed as $\sigma^{\rm CPU}_i = \sigma^{\rm CPU}_i \times 16$ or $\sigma^{\rm CPU}_i = \sigma^{\rm CPU}_i /16$ at each breakpoint.

We compare the performance of TOD with two other schemes, i.e. {\it Greedy} and {\it Round-Robin}.
In the greedy scheme, we assume full information of every realization and offload the task to the node achieving minimal latency in each time slot.
Note that the greedy scheme is not causal and cannot be applied in practice.
In the round-robin scheme, each task is offloaded to the fog nodes in a cyclic way with equal chances.

\begin{figure}
\centering
 \subfigure[$\Upsilon_T = 150$.]
 {\epsfig{file=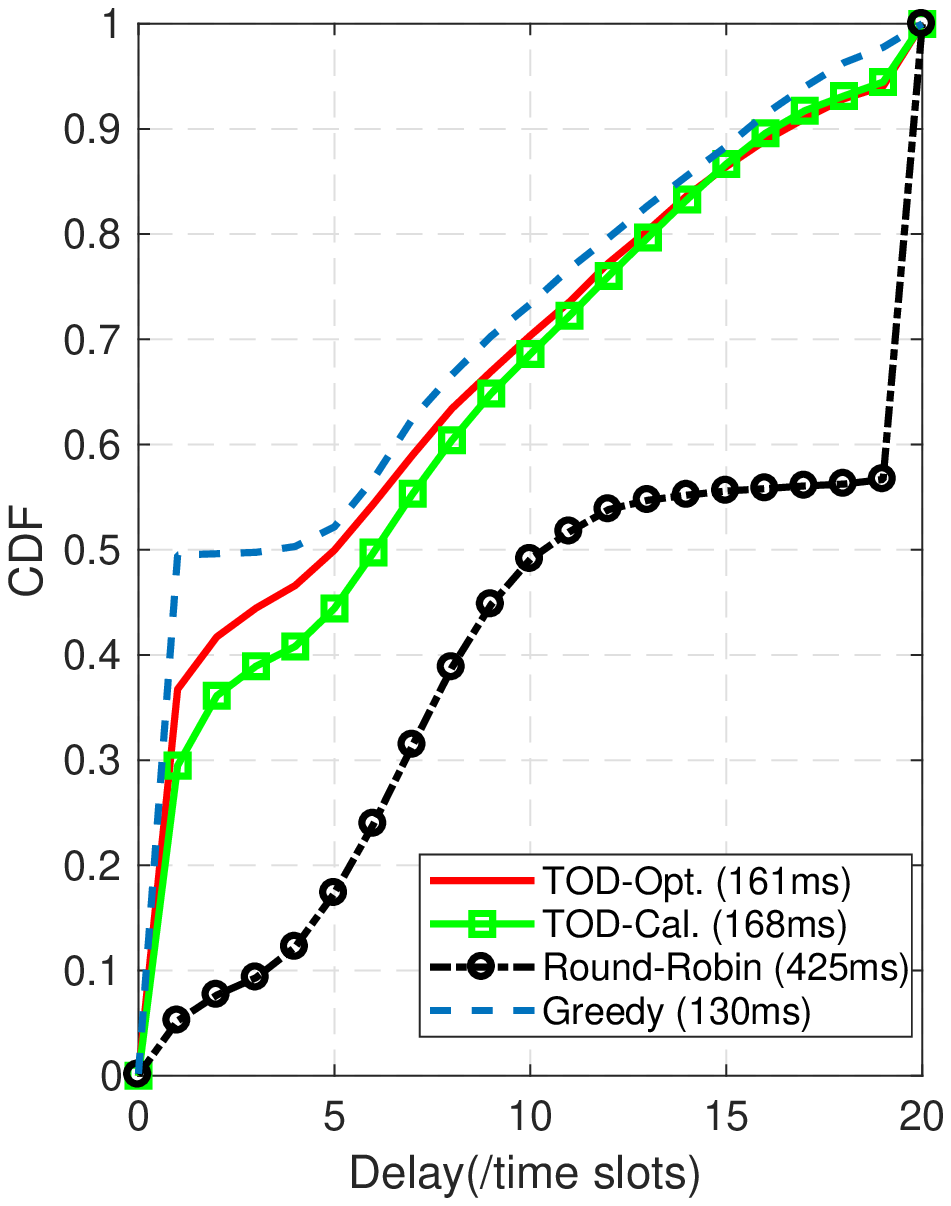,width=0.22\textwidth}
 \label{Fig:CDF_150}}
  \subfigure[$\Upsilon_T = 10$.]
 {\epsfig{file=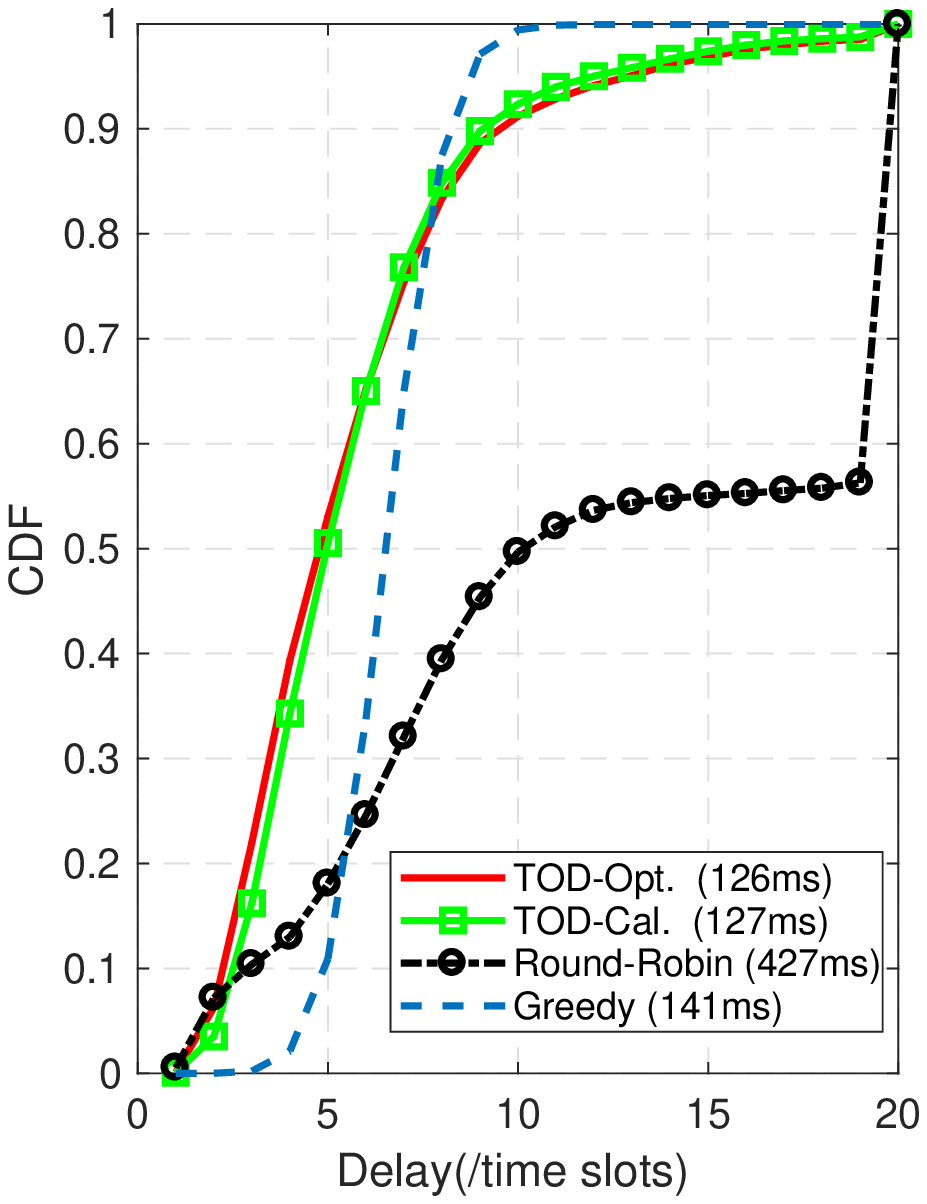,width=0.22\textwidth}
 \label{Fig:CDF_10}}
\caption{CDFs of the latency of processing $10,000$ tasks. TOD-Opt.: solution with TOD using $\gamma_{\rm opt} = 0.9993$ (left), $\gamma_{\rm opt} = 0.9995$ (right); TOD-Cal.: solution with TOD using $\gamma_{\rm cal} = 0.9985$ (left), $\gamma_{\rm cal} = 0.9996$ (right).
}
\label{Fig:CDF}
 \vspace{-5pt}
\end{figure}

\begin{figure}[t]
\centering
\includegraphics[width = 0.45\textwidth]{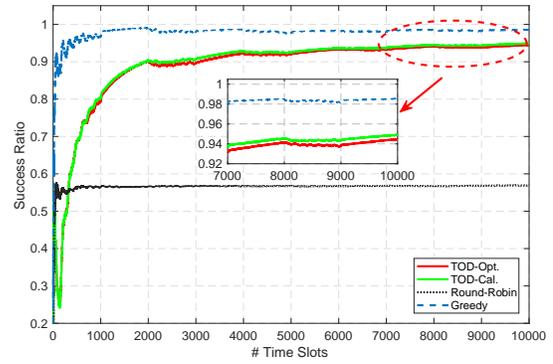}
\vspace{-5pt}
\caption{Cumulative success ratio versus time. The number of breakpoints is set as $\Upsilon_T = 150$.
}
\label{Fig:SuccessCntPer}
 \vspace{-10pt}
\end{figure}

\begin{figure}[t]
\centering
\includegraphics[width = 0.45\textwidth]{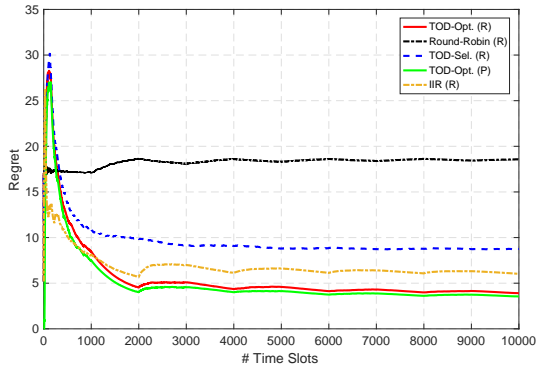}
\vspace{-10pt}
\caption{Average regret versus time.
The regret is calculated as $\hat \zeta_T^{\rm} :=\frac{1}{T} \sum_{t=1}^T ( U_t(I_t) - U_t(i_t) )$.
	(R): Regret by taking $i_t= \arg \max_{i\in\mathcal I} U_t(i)$; (P): Pseudo-regret by taking $i_t = \arg \max_{i\in\mathcal I} \mu_t(i)$.
	TOD-Sel.: solution with TOD using $\gamma_{\rm sel} = 0.98$;
	IIR: solution with separated exploration and exploitation using the same discount factor as TOD-Opt.
	The number of breakpoints is set as $\Upsilon_T = 150$.}
\label{Fig:Regret}
 \vspace{-8pt}
\end{figure}

In Fig.~\ref{Fig:CDF}, with different number of breakpoints, i.e $\Upsilon_T$, we demonstrate the effectiveness and robustness of TOD by showing cumulative distribution functions (CDFs) of the latency of processing $10,000$ tasks with different schemes.
{\it TOD-Opt} and {\it TOD-Cal} in Fig.~\ref{Fig:CDF} represent two different criteria to choose $\gamma$ in TOD.
The discount factor $\gamma$ in the former criterion is searched over $(0,1)$ to achieve the minimal average latency, while the other one is calculated following (\ref{Eq:Sel_Gamma}).
Both the left and the right parts in Fig.~\ref{Fig:CDF} show the proposed TOD algorithm performs much better than the round-robin scheme, and performs close to the greedy method, which achieves the minimal realization of latency in each round.
Additionally, we can learn from Fig.~\ref{Fig:CDF} that the $\gamma$ calculated following (\ref{Eq:Sel_Gamma}) performs as well as the optimal one.
In Fig.~\ref{Fig:CDF_150}, there exists one breakpoint every $67$ tasks on average as $\Upsilon_T=150$, which indicates TOD is able to learn the system under frequent changes of parameter distribution.
It is also worth noting that, in Fig.~\ref{Fig:CDF_10}, TOD achieves even less average latency than the greedy scheme in the case of limited abrupt changes, i.e. $\Upsilon_T=10$.
This phenomenon reveals that the decision with minimal latency in each time slot may not be the global optimum of (\ref{Eq:GlobalOpt}).
It also corroborates our previous analysis that every choice will affect the future state of the node, and further affects the following offloading decisions.

Fig.~\ref{Fig:SuccessCntPer} presents the ratio of the number of successfully processed tasks to the number of tasks.
A task is successful if the latency is less than $\tau_{\rm max}$.
Although the success ratios of TOD are lower than the greedy scheme due to the exploration of nodes, they show the tendencies to approaching the greedy scheme with time going on.

Fig.~\ref{Fig:Regret} depicts the regrets from different schemes.
Note the regret is based on the optimal realization (greedy method), and the pseudo-regret is based on the optimal expectation.
In {\it IIR}, we separate the exploration and the exploitation to two phases.
In the exploration phase, the round-robin method is adopted.
In the exploitation phase, we focus on maximizing the estimated reward defined in (\ref{Eq:X_est}), which is actually an estimate based on the infinite impulse response (IIR) filter.
The ratio of two phases is searched to achieve the minimal regret.
It can be observed that, either in the sense of the regret or in the sense of
the pseudo-regret, the proposed TOD algorithm achieves much lower regrets than the
round-robin scheme and the {\it IIR} scheme. 
This phenomenon shows that our proposed method performs well in dealing with the exploration-exploitation tradeoff.
Besides, as the discount factor is set to $\gamma_{\rm sel}$, the TOD performance deteriorates a lot.
This further indicates the importance of the exploration bonus.

\section{Conclusion} \label{Conclusion}

In this paper, an efficient online task offloading strategy and the corresponding
performance guarantee in a fog-enabled network have been studied. Considering that
the expectations of processing speeds change abruptly at unknown time instants,
and the system information is available only after finishing the corresponding tasks, we
have formulated it as a stochastic programming with delayed bandit feedbacks.
To solve this problem, we have provided TOD, an efficient online task offloading
algorithm based on the UCB policy. Given a particular number of breakpoints
$\Upsilon_T$, we have proven that the bound on the number of tasks offloaded to a
particular non-optimal node is in the order of $\mathcal O(\sqrt{\Upsilon_T T}\log T)$.
Besides, we have also proven that the pseudo-regret goes to zero almost surely when
the number of tasks goes to infinity. Simulations have demonstrated that the proposed
TOD algorithm is capable of learning and picking the right node to offload tasks
under non-stationary circumstances.

\section{Appendix}

\begin{proof}
According to the definition of $\tilde N_T(i)$, it can be decomposed as
\begin{equation}
\begin{split}
		\tilde N_T(i)  \le & 1 + \sum_{t= K + 1}^T \mathds 1{\{I_t = i \ne i^*_t, N_t(\gamma,i) < A(\gamma,i)\}} \\
		& + \sum_{t= K + 1}^T \mathds 1{\{I_t = i \ne i^*_t, N_t(\gamma,i) \ge A(\gamma,i)\}},
\end{split}
\end{equation}
where $A(\gamma,i)$ is a particular function with respect to $\gamma$.
The number of missing feedbacks when task-$t$ is offloaded can be defined as
\begin{equation}
	G_t(i):= \sum_{s=1}^{t-1}  \mathds 1{\{I_s=i\}} - N_t(1,i).
\end{equation}
Clearly, the number of missing feedbacks is no larger than $\tau_{\rm max}$, i.e. $G_t(i)\le \tau_{\rm max}, \forall t,i$.
According to Lemma 1 in \cite{2011_non-stationary_UCB}, for any $i\in\mathcal I, \tau>0, m>0$, the following inequality is derived:
\begin{equation}
	\sum_{t=K+1}^T \mathds 1{\{I_t = i, \sum_{s=t-\tau}^{t-1} \mathds 1{\{I_s=i\}}<m\}} \le \lceil {T}/{\tau} \rceil m.
\end{equation}
Due to the fact that
\begin{equation}
	\begin{split}
		& \left\{t|\sum_{s=t-\tau}^{t-1} \mathds 1{\{I_s=i\}}<m \right \} \\
		 =& \left\{t|\sum_{s=t-\tau}^{t-1} \mathds \gamma^\tau 1_{\{I_s=i\}}<\gamma^\tau m \right \} \\
		 \supseteq &\left\{t|   \sum_{s=t-\tau}^{t-1} \gamma^{t-\tau_s}  \mathds 1{\{I_s=i, \tau_s\le t\}} + G_t(i)<\gamma^\tau m \right\} \\
		 \supseteq &\left\{t|N_t(\gamma,i)+\tau_{\rm max}<\gamma^\tau m \right\},
	\end{split}
\end{equation}
we have
\begin{equation}
	\sum_{t=K+1}^T \mathds 1{\{I_t = i, N_t(\gamma,i)+\tau_{\rm max}<\gamma^\tau m \}} \le \lceil {T}/{\tau} \rceil m.
\end{equation}
Let $m=\gamma^{-\tau}( A(\gamma,i) + \tau_{\rm max})$, we have
\begin{equation}
\begin{split}
	&\sum_{t=K+1}^T \mathds 1{\{I_t = i\ne i^*_t, N_t(\gamma,i)<A(\gamma,i) \}} \\
	 \le& \lceil {T}/{\tau} \rceil \gamma^{-\tau}( A(\gamma,i) + \tau_{\rm max}).
\end{split}
\end{equation}
Let $\Upsilon_T$ denote the number of breakpoints before time $T$, and $\mathcal T(\gamma)$ denote the set of ``well offloaded'' tasks.
Mathematically, these tasks are defined as follows.
\begin{equation}
\begin{split}
	\mathcal T(\gamma):=&\{t|   t\in \{K+1,\cdots,T\}; \\
	& \mu_s(j)=\mu_t(j), \forall s\in (t-C(\gamma),t), \forall j \in \mathcal I \},
\end{split}
\end{equation}
where $C(\gamma)$ indicates the number of tasks, of which the delay is poorly estimated.
Because of this, the D-UCB policy may not offload tasks to the optimal node, which leads to the following bound:
\begin{equation}
\label{Eq:BreakPoints_result}
\begin{split}
&\sum_{t= K + 1}^T \mathds 1{\{I_t = i \ne i^*_t, N_t(\gamma,i) \ge A(\gamma,i)\}} \\
\le & \Upsilon_T C(\gamma) +  \sum_{t\in\mathcal T(\gamma)} \mathds 1{\{I_t = i \ne i^*_t, N_t(\gamma,i) \ge A(\gamma,i)\}}.
\end{split}
\end{equation}
Next, we need to upper-bound the last term in (\ref{Eq:BreakPoints_result}).
There are three facts:\\
i) The event $\{I_t = i \ne i^*_t\}$ occurs if and only if the event $\mathcal E_t(\gamma,i) = \{\bar \mu_t(\gamma,i_t^*)-c_t(\gamma,i_t^*) \ge \bar \mu_t(\gamma,i)-c_t(\gamma,i)\}$ occurs; \\
ii) $\mathcal E_t(\gamma,i) \subseteq \{\bar \mu_t(\gamma,i_t^*)-c_t(\gamma,i_t^*) \ge \mu_t(i_t^*) \} \cup \{\bar \mu_t(\gamma,i)-c_t(\gamma,i) < \mu_t(i_t^*) \}$; \\
iii) $\{\bar \mu_t(\gamma,i)-c_t(\gamma,i) < \mu_t(i_t^*) \}\subseteq \{\bar \mu_t(\gamma,i) + c_t(\gamma,i) <  \mu_t(i) \} \cup\{\mu_t(i)-\mu_t(i_t^*) < 2c_t(\gamma,i)  \}$.\\
Based on these facts, the following inequality is obtained:
\begin{equation}
	\begin{split}
		  & {\{I_t = i \ne i^*_t, N_t(\gamma,i) \ge A(\gamma,i)\}} \\
\subseteq & {\{\bar \mu_t(\gamma,i_t^*)-c_t(\gamma,i_t^*) \ge \mu_t(i_t^*) \}} \\
		  & \cup {\{\bar \mu_t(\gamma,i) + c_t(\gamma,i) <  \mu_t(i) \}} \\
		  & \cup {\{\mu_t(i)-\mu_t(i_t^*) < 2c_t(\gamma,i),N_t(\gamma,i) \ge A(\gamma,i)  \}}.
	\end{split}
\end{equation}
Namely, when node-$i$ is tested enough times by the task node, the event $\{I_t = i \ne i^*_t\}$ only occurs under three circumstances: i) the delay of the optimal node is substantially overestimated; ii) the delay of node-$i$ is substantially underestimated; iii) both delay expectations, i.e. $\mu_t(i^*_t)$ and $\mu_t(i)$, are close enough.

However, if $A(\gamma,i)$ is chosen appropriately, the event $\{\mu_t(i)-\mu_t(i_t^*) < 2c_t(\gamma,i),N_t(\gamma,i) \ge A(\gamma,i)  \}$ never occurs.
Denote the minimal difference between the expected delay of node-$i$ and the expected delay of the best node-$i^*$ by $\Delta \mu_T(i)$, i.e.
	\begin{equation}
		\Delta \mu_T(i) := \min_{t\in\{1,\cdots,T\}, i^*_t\ne i}  \ \mu_t(i)-\mu_t(i^*_t).
	\end{equation}
Let $A(\gamma,i):= 16 \tau_{\rm max}^2 \xi \log n_{t^*}(\gamma) (\Delta\mu_T(i))^{-2}$, where $t^* = \arg \max_{t\in\{1,\cdots,T\}} n_t(\gamma)$.
Recalling $N_t(\gamma,i) \ge A(\gamma,i)$, we have
\begin{equation}
\label{Eq:Contradict_1}
	\frac{\Delta\mu_T(i)}{2} = 2 \tau_{\rm max} \sqrt{\frac{ \xi \log n_{t^*}(\gamma) }{A(\gamma,i)}}
	\ge   c_t(\gamma,i).
\end{equation}
However, from the definition of $\Delta \mu_T(i)$ we obtain:
\begin{equation}
	\frac{\Delta \mu_T(i)}{2} \le \frac{\mu_t(i) - \mu_t(i^*_t)}{2} <  c_t(\gamma,i),
\end{equation}
which is contradict with (\ref{Eq:Contradict_1}).
Thus the events $\{\mu_t(i)-\mu_t(i_t^*) < 2c_t(\gamma,i)\}$ and $\{N_t(\gamma,i) \ge A(\gamma,i)\}$ never occur simultaneously, which indicates that we only need to upper-bound the probability of events $\{\bar \mu_t(\gamma,i_t^*)-c_t(\gamma,i_t^*) \ge \mu_t(i_t^*) \}$ and $\cup {\{\bar \mu_t(\gamma,i) + c_t(\gamma,i) <  \mu_t(i) \}}$.
Define $M_t(\gamma,i)$ as
\begin{equation}
\begin{split}
M_t(\gamma,i):=  \sum_{s=1}^{t} \gamma^{t-\tau_s} m_t(s,i)\mathds 1{\{I_s=i, \tau_s\le t\}},
\end{split}
\end{equation}
where $m_t(s,i)  = \big( L_t T(i) + Q_t(i) \mu^W_s(i)  + L_t\mu^P_s(i)\big)$,
then
\begin{equation}
	\begin{split}
		&|M_t(\gamma,i) - \mu_t(i)N_t(\gamma,i)| \\
		=& \left| \sum_{s=1}^{t-C(\gamma)}\gamma^{t-\tau_s} \left(m_t(s,i)-\mu_t(i) \right)\mathds 1{\{I_s=i, \tau_s\le t\}}  \right| \\
		 \le &\sum_{s=1}^{t-C(\gamma)}\gamma^{t-\tau_s} \left|m_t(s,i)-\mu_t(i) \right|\mathds 1{\{I_s=i, \tau_s\le t\}} \\
		 \le & \tau_{\rm max}\sum_{s=1}^{t-C(\gamma)}\gamma^{t-\tau_s} \mathds 1{\{I_s=i, \tau_s\le t\}} \\
		 \le & \tau_{\rm max}\sum_{s=1}^{t-C(\gamma)}\gamma^{t-s-\tau_{\rm max}} \mathds 1{\{I_s=i\}} \\
		 \le &\tau_{\rm max} \gamma^{C(\gamma)-\tau_{\rm max}} (1-\gamma)^{-1}.
	\end{split}
\end{equation}
Combining with the following two facts:
\begin{equation}
	\left| \frac{M_t(\gamma,i)}{N_t(\gamma,i)} - \mu_t(i) \right| \le \tau_{\rm max},  \  \min(1,x) \le \sqrt{x}, \forall x\ge 0,
\end{equation}
we obtain
\begin{equation}
\label{Eq:Bound_bias}
	\left| \frac{M_t(\gamma,i)}{N_t(\gamma,i)} - \mu_t(i) \right| \le \tau_{\rm max} \sqrt{ \frac{\gamma^{C(\gamma)-\tau_{\rm max}} } {(1-\gamma)N_t(\gamma,i)}}.
\end{equation}
Let
\begin{equation}
	C(\gamma):= \log_\gamma((1-\gamma)\xi\log n_K(\gamma)) + \tau_{\rm max},
\end{equation}
the inequality in (\ref{Eq:Bound_bias}) turns to be
\begin{equation}
	\left| \frac{M_t(\gamma,i)}{N_t(\gamma,i)} - \mu_t(i) \right| \le \frac{1}{2} c_t(\gamma,i).
\end{equation}
Defining $Y_t(\gamma,i):=\bar \mu_t(\gamma,i)$, the following inequality can be deduced:
\begin{equation}
	\begin{split}
	& \mathbb P\left( \mu_t(i)- \bar \mu_t(\gamma,i) >c_t(\gamma,i) \right) \\
	\le & \mathbb P\left( \mu_t(i)-\bar \mu_t(\gamma,i) >  \frac{1}{2}c_t(\gamma,i) +  \left| \frac{M_t(\gamma,i)}{N_t(\gamma,i)} - \mu_t(j) \right| \right)  \\
	 \le & \mathbb P\left( \frac{M_t(\gamma,i)}{N_t(\gamma,i)}-\bar \mu_t(\gamma,i)  > \tau_{\rm max} \sqrt{\frac{ \xi \log n_t(\gamma) }{N_t(\gamma,i)}}   \right)  \\
	= & \mathbb P\left( \frac{M_t(\gamma,i) - Y_t(\gamma,i)}{\sqrt{N_t(\gamma^2,i)}} > \tau_{\rm max} \sqrt{\frac{ \xi N_t(\gamma,i) \log n_t(\gamma) }{N_t(\gamma^2,i)}}   \right)  \\
	\le & \mathbb P\left( \frac{M_t(\gamma,i) - Y_t(\gamma,i)}{\sqrt{N_t(\gamma^2,i)}} > \tau_{\rm max} \sqrt{{ \xi \log n_t(\gamma) }}   \right)  \\
	\overset{(a)}{\le} & \left\lceil\frac{\log n_t(\gamma)}{\log(1+\eta)} \right\rceil \exp \left( -{2 \xi \log n_t(\gamma)} \left(1-\frac{\eta^2}{16}\right)  \right),
	\end{split}
\end{equation}
where $(a)$ holds due to Theorem 4 in \cite{2011_non-stationary_UCB}.
Let $\eta:= 4\sqrt{1-1/(2\xi)}, \xi >1/2$, we further obtain:
\begin{equation}
 	\mathbb P\left( \mu_t(i)-\bar \mu_t(\gamma,i) > c_t(\gamma,i) \right) \le \left\lceil\frac{\log n_t(\gamma)}{\log(1+\eta)} \right\rceil  n_t(\gamma)^{-1}.
 \end{equation}
 Till now, the expectation of $\tilde N_T(i)$ can be upper-bounded as
\begin{equation}
\begin{split}
\mathbb E \left[\tilde N_T(i)\right] \le & 1 + \lceil {T}/{\tau} \rceil \gamma^{-\tau}( A(\gamma,i) + \tau_{\rm max} )
+ \Upsilon_T C(\gamma) \\
& + 2 \sum_{t\in\mathcal T(\gamma)} \left\lceil\frac{\log n_t(\gamma)}{\log(1+\eta)} \right\rceil n_t(\gamma)^{-1}.
\end{split}
\end{equation}
Assuming
\begin{equation}
	\sum_{s=1}^{\tau} \gamma^{\tau-s+\tau_{\rm max}} =\frac{\gamma^{\tau_{\rm max}} (1-\gamma^{\tau})} {1-\gamma} >e, \ \ \tau = (1-\gamma)^{-1},
\end{equation}
we have
\begin{equation}
	n_t{(\gamma)} \ge \frac{\gamma^{\tau_{\rm max}} (1-\gamma^{\tau})} {1-\gamma}=\tilde n(\gamma), \forall t\ge \tau ,
\end{equation}
and
\begin{equation}
		\left\lceil\frac{\log n_t(\gamma)}{\log(1+\eta)} \right\rceil n_t(\gamma)^{-1} \le \left\lceil\frac{\log \tilde n(\gamma)}{\log(1+\eta)} \right\rceil \tilde n(\gamma)^{-1}, \forall t\ge \tau.
\end{equation}
Then the following inequality holds:
\begin{equation}
\begin{split}
&\sum_{t\in\mathcal T(\gamma)} \left\lceil\frac{\log n_t(\gamma)}{\log(1+\eta)} \right\rceil n_t(\gamma)^{-1} \\
\le &\tau - K + \sum_{t=\tau}^T \left\lceil\frac{\log \tilde n(\gamma)}{\log(1+\eta)} \right\rceil \tilde n(\gamma)^{-1}\\
\le &\tau - K +  \left\lceil\frac{\log \frac{\gamma^{\tau_{\rm max}} (1-\gamma^{\tau})} {1-\gamma}}{\log(1+\eta)} \right\rceil \frac{T(1-\gamma)}{\gamma^{\tau_{\rm max}} (1-\gamma^{\tau})}. \\
\end{split}
\end{equation}
Therefore, we can upper-bound the expectation of $\tilde N_T(i)$ as
\begin{equation}
\begin{split}
\mathbb E \left[\tilde N_T(i)\right]& \le  1 + \lceil {T}/{\tau}  \rceil \gamma^{-\tau}( A(\gamma,i) + \tau_{\rm max})
+ \Upsilon_T C(\gamma) \\
+ & 2 \left(\tau - K +  \left\lceil\frac{\log \frac{\gamma^{\tau_{\rm max}} (1-\gamma^{\tau})} {1-\gamma}}{\log(1+\eta)} \right\rceil \frac{T(1-\gamma)}{\gamma^{\tau_{\rm max}} (1-\gamma^{\tau})}\right)\\
&\le  1 + \lceil {T(1-\gamma)}  \rceil \gamma^{-\frac{1}{1-\gamma}}( A(\gamma,i) + \tau_{\rm max})
+ \Upsilon_T C(\gamma) \\
+ & 2 \left(\frac{1}{1-\gamma} + \frac{T(1-\gamma)}{\gamma^{\tau_{\rm max}} } {\log \frac{\gamma^{\tau_{\rm max}} } {1-\gamma}}  \right)\\
&\overset{(b)}{\le}  1 + T(1-\gamma) B(\gamma) + \Upsilon_T C(\gamma) + \frac{2}{1-\gamma},
\end{split}
\end{equation}
where $B(\gamma)$ is defined as:
\begin{equation}
\begin{split}
B(\gamma) := &  \left( \frac{-16 \tau_{\rm max}^2 \xi \log {[\gamma^{\tau_{\rm max}}  (1-\gamma)]} } {(\Delta\mu_T(i))^{2}}+ \tau_{\rm max}\right ) \\
 & \cdot \frac{\lceil T(1-\gamma)  \rceil}{T(1-\gamma)} \gamma^{-\frac{1}{1-\gamma}}  +  \frac{2}{\gamma^{\tau_{\rm max}} } {\log \frac{\gamma^{\tau_{\rm max}} } {1-\gamma}},
\end{split}
\end{equation}
and $(b)$ holds since
\begin{equation}
	A(\gamma,i) \le \frac{-16 \tau_{\rm max}^2 \xi \log {[\gamma^{\tau_{\rm max}}  (1-\gamma)]} } {(\Delta\mu_T(i))^{2}}.
\end{equation}

\end{proof}


\begin{thebibliography}{1}

\bibitem{2013_Dinh_survey}
H. T. Dinh, C. Lee, D. Niyato, and P. Wang, ``A survey of mobile cloud computing: Architecture, applications, and approaches,'' \emph{Wireless Commun. Mobile Comput.}, vol. 13, no.18, pp. 1587--1611, Dec. 2013.

\bibitem{2015_Chiang_Fog_IoT}
M. Chiang and T. Zhang, ``Fog and IoT: An overview of research opportunities,'' \emph{IEEE Internet Things J.}, vol. 3, no. 6, pp. 854--864, Dec. 2016.

\bibitem{2009_Satyanarayanan_cloud}
M. Satyanarayanan, P. Bahl, R. Caceres, and N. Davies, ``The case for VM-based cloudlets in mobile computing,'' \emph{IEEE Pervasive Comput.}, vol. 8, no. 4, pp. 14--23, Oct. 2009.



\bibitem{2013_Barbera_bandwidth_energy_MCC}
M. V. Barbera, S. Kosta, A. Mei, and J. Stefa, ``To offload or not to offload? The bandwidth and energy costs of mobile cloud computing,'' in \emph{Proc. IEEE INFOCOM}, Turin, Italy, Apr. 2013, pp. 1285--1293.

\bibitem{2012_Kosta_ThinkAir}
S. Kosta, A. Aucinas, P. Hui, R. Mortier, and X. Zhang, ``ThinkAir: Dynamic resource allocation and parallel execution in the cloud for
mobile code offloading,'' in \emph{Proc. IEEE INFOCOM}, Orlando, FL, USA, Mar. 2012, pp. 945--953.

\bibitem{2017_Yang_MEETS}
Y. Yang, K. Wang, G. Zhang, X. Chen, X. Luo, and M. Zhou, ``MEETS: Maximal energy efficient task scheduling in homogeneous fog networks,'' \emph{submitted to IEEE Internet Things J.}, 2017.

\bibitem{2017_Dinh_MobileEdge}
T. Q. Dinh, J. Tang, Q. D. La, and T. Q. S. Quek, ``Offloading in mobile edge computing: Task allocation and computational frequency scaling,'' \emph{IEEE Trans. on Commun.}, vol. 65, no. 8, pp. 3571--3584, Aug. 2017.

\bibitem{2017_You_MobileEdge}
C. You, K. Huang, H. Chae, and B.-H. Kim, ``Energy-efficient resource allocation for mobile-edge computation offloading,'' \emph{IEEE Trans. Wireless
Commun.}, vol. 16, no. 3, pp. 1397--1411, Mar. 2017.



\bibitem{2015_Kwak_JSAC_DREAM}
J. Kwak, Y. Kim, J. Lee, and S. Chong, ``DREAM: Dynamic resource and task allocation for energy minimization in mobile cloud systems,'' \emph{IEEE J. Sel. Areas Commun}, vol. 33, no. 12, pp. 2510--2523, Dec. 2015.



\bibitem{2017_Mao_MobileEdge}
Y. Mao, J. Zhang, S. H. Song, and K. B. Letaief, ``Stochastic joint radio and computational resource management for multi-user mobile-edge computing systems,''  \emph{IEEE Trans. Wireless Commun.}, vol. 16, no. 9, pp. 5994--6009, Sept. 2017.



\bibitem{2017_Yang_DEBTS}
Y. Yang, S. Zhao, W. Zhang, Y. Chen, X. Luo, and J. Wang, ``DEBTS: Delay energy balanced task scheduling in homogeneous fog networks,'' \emph{IEEE Internet Things J.}, in press.

\bibitem{2017_Pu_D2D_fog}
L. Pu, X. Chen, J. Xu, and X. Fu, ``D2D fogging: An energy-efficient and incentive-aware task offloading framework via network-assisted D2D
collaboration,'' \emph{IEEE J. Sel. Areas Commun.}, vol. 34, no.12, pp. 3887--3901, Dec. 2016.

\bibitem{2015_Chen_Mobile_Cloud_Computing}
X. Chen, ``Decentralized computation offloading game for mobile cloud computing,'' \emph{IEEE Trans. Parallel Distrib. Syst.}, vol. 26, no. 4, pp. 974--983, Apr. 2015.

\bibitem{2017_Chen_GG_BanditCVX}
T. Chen and G. B. Giannakis, ``Bandit convex optimization for scalable and dynamic IoT management'', \emph{arXiv preprint arXiv:1707.09060}, 2017.

\bibitem{2014_Tekin_ExpertLearning}
C. Tekin and M. van der Schaar, ``An experts learning approach to mobile service offloading,'' in \emph{Proc. Annu. Allerton Conf. Commun., Control,
Comput.}, 2014, pp. 643--650.




\bibitem{2002_Auer_UCB}
P. Auer, N. Cesa-Bianchi, and P. Fischer, ``Finite-time analysis of the multiarmed bandit problem,'' \emph{Mach. Learn.}, vol. 47, no. 2, pp. 235--256, May 2002.



\bibitem{2011_non-stationary_UCB}
A. Garivier and E. Moulines, ``On upper-confidence bound policies for switching bandit problems,'' in \emph{Proc. Int. Conf. Algorithmic Learn. Theory}, Espoo, Finland, Oct. 2011, pp. 174--188.

\bibitem{1985_Berry_bandit}
D. A. Berry and B. Fristedt, \emph{Bandit Problems: Sequential Allocation of Experiments}. London, U.K.: Chapman \& Hall, 1985.

\bibitem{2013_DelayFeedback_Online}
P. Joulani, A. Gyorgy, and C. Szepesvari, ``Online learning under delayed feedback,'' in \emph{Proc. Int. Conf. Mach. Learn.}, Atlanta, GA, USA, Jun. 2013, pp. 1453--1461.

\bibitem{2012_Bubeck_Bandit}
S. Bubeck and N. Cesa-Bianchi, ``Regret analysis of stochastic and nonstochastic multi-armed bandit problems,'' \emph{Found. Trends Mach. Learn.}, vol. 5, no. 1, pp. 1--122, 2012.

\bibitem{2018_Zhu_D2D}
Z. Zhu, S. Jin, Y. Yang, H. Hu, and X. Luo, ``Time reusing in D2D-enabled cooperative networks,'' \emph{IEEE Trans. Wireless Commun.}, in press.


\end{thebibliography}
\end{document}